\newtheorem{theorem}{Theorem}
\theoremstyle{definition}
\newcommand{\bra}[1]{\langle #1|}
\newcommand{\ket}[1]{| #1 \rangle }
\newcommand{\tr}[1]{{\rm tr}[#1]}
\newcommand{\Bra}{\langle}
\newcommand{\Ket}{\rangle}
\newcommand{\be}{\begin{eqnarray}}
\newcommand{\ee}{\end{eqnarray}}
\def\Tr{{\rm tr}}
\newcommand{\cE}{{\cal E}}
\newcommand{\cG}{{\cal G}}
\newcommand{\cI}{{\cal I}}
\newcommand{\cF}{{\cal F}}
\newcommand{\cH}{{\cal H}}
\newcommand{\cL}{{\cal L}}
\begin{document}
\title{Quantum Finite-Depth Memory Channels: Case Study}
\author{Tom\'a\v s Ryb\'ar$^{1}$, M\'ario Ziman$^{1,2}$}
\affiliation{
$^{1}$Research Center for Quantum Information, Institute of Physics, Slovak Academy of Sciences, D\'ubravsk\'a cesta 9, 845 11 Bratislava, Slovakia \\
$^{2}$Faculty of Informatics, Masaryk University, Botanick\'a 68a, 602 00 Brno, Czech Republic
}
\begin{abstract}
We analyze the depth of the memory of quantum memory channels generated by
a fixed unitary transformation describing the interaction between the principal
system and internal degrees of freedom of the process device.
We investigate the simplest case of a qubit memory channel 
with a two-level memory system. In particular, we explicitly characterize 
all interactions for which the memory depth is finite. We show that the 
memory effects are either infinite, or they disappear after at most two uses 
of the channel. Memory channels of finite depth can be to some extent
controlled and manipulated by so-called reset sequences. We show that
actions separated by the sequences of inputs of the length of the memory depth
are independent and constitute memoryless channels.
\end{abstract}
\pacs{03.65.Ta,03.65.Yz,03.67.Hk}
\maketitle

\section{Memory effects}
Schr\"odinger equation implies that an evolution of a closed quantum system 
is unitary. However, this ideal picture of closed and isolated quantum
system is very difficult to achieve experimentally. Unavoidable interactions 
between the system and its environment result in a nonunitary evolution.
Fortunately, under some specific though quite realistic conditions
the dynamics of the system can be described without the necessity 
of explicit consideration of the environment's degrees of freedom. 
The crucial assumption
of open system dynamics is that initially the system is statistically 
completely independent of the environment degrees of freedom 
affecting its time dynamics. It means that a preparation procedure is completely
uncorrelated from the evolution process. 

For example, a photon source (e.g. laser) is independent 
of an optical cable used for the transmission. Only after inserted into
the optical cable the photon is affected by its properties resulting
in a state change. Although the interaction between the photon
and the cable is driven by Schr\"odinger equation, the photon itself
undergoes a nonunitary evolution. In particular, let us denote
by $\varrho_1$ the initial state of the photon and by $\xi$ the initial
state of the environment represented by the optical cable. The input-output
transformation then reads
\be
\varrho_1\to\varrho_1^\prime={\rm Tr}_{\rm env}
[U\varrho_1\otimes\xi_{\rm env} U^\dagger]=\cE_1[\varrho_1]\, .
\ee
By definition the mapping $\cE$ describing the quantum process (channel)
is linear, completely positive and trace-preserving.

But, not only the photon state has changed. Also the environment degrees 
of freedom evolved into
\be
\xi^\prime_{\rm env} = {\rm Tr}_{1}
[U\varrho_1\otimes\xi_{\rm env} U^\dagger]=\cF[\xi_{\rm env}]\, .
\ee
This \emph{concurrent} mapping $\cF$ acting on the memory system
is a valid  channel, because it is linear, completely positive 
and trace preserving. Let us note that such concurrent channel 
depends only on the input system state, hence for any channel $\cE$ 
acting on a system there exist many concurrent channels $\cF$
acting on the memory, and {\it vice versa}.

If the same optical cable is 
used once more, then 
\be
\varrho^{\prime}_2=\cE_2[\varrho_2]={\rm Tr}[U\varrho_2\otimes
\xi_{\rm env}^\prime U^\dagger]\, ,
\ee
and $\cE_1\neq\cE_2$ in general. Moreover,
\be
\nonumber
\omega_{12}&=&{\rm Tr}_{\rm env}[U_2 U_1 
(\varrho_1\otimes\varrho_2\otimes\xi_{\rm env}) U_1^\dagger U_2^\dagger]\\
&\neq& \cE_1[\varrho_1]\otimes\cE_2[\varrho_2]\,,
\ee
where $U_1$ ($U_2$) acts on the environment and the first (second) system.
We see that subsequent usages of the same process device (e.g. optical cable) 
are not necessarily independent. Usually, a time intervals in between
the usages are sufficiently large so that the environment relaxes into 
its original initial state, hence $\omega_{12}
=\cE_1\otimes\cE_1[\varrho_1\otimes\varrho_2]$. If this holds for
any number of uses, we say that the device is memoryless and its action
can be fully described by means of quantum channels, i.e. completely positive
trace-preserving linear maps. However, our goal is to investigate 
the cases when the relaxation processes are not sufficiently fast 
(or are not happening at all) to guarantee the same conditions for 
each run of the experiment (e.g. photon transmission). Such
devices are described by quantum memory channels. In particular, we will
focus on characterization and properties of those memory channels, 
for which the memory effects are finite.

The research subject of quantum memory channels is relatively new. Once the
nature of the memory mechanism is known it can be exploited to increase 
the information transmission rates. Moreover, in this case the entangled
encoding strategies can significantly overcome the factorized ones.  Thus, 
the capacities (either classical, or quantum) of quantum memory channels 
are not necessarily additive. Naturally, the research is mostly focused on 
investigation of transmission rates for particular classes of memory channels 
\cite{macchiavello2002, macchiavello2004, ball2004, bowen2005, giovannetti2005a, karpov2006, karimipour2006, daems2007, datta2007, darrigo2007, caruso2008, wouters2009, dorlas2009}. Recently, attention has been paid to an interesting 
class of so-called bosonic memory channels
\cite{giovannetti2005b, cerf2005, ruggeri2005, ruggeri2007, pilyavets2008, lupo2009a, lupo2009b, schafer2009, lupo2009c} and also
to memory effects in the transmission of quantum states over the
spin chains \cite{plenio2007, bayat2008, plenio2008}. Our aim is to
investigate the structural properties of quantum memory channels rather
than to analyze their communication capabilities. A general framework
and structural theorem for quantum memory channels was given in the seminal work
of Kretschmann and Werner \cite{kretschmann2005}. In \cite{chiribella2008}
the discrimination of general quantum memory channels was investigated and
in \cite{rybar2008} the concept of repeatable quantum memory channels 
was introduced and analyzed. In \cite{kretschmann2005} the authors introduced
the concept of forgetful quantum memory channels and showed that these 
memory channels form a dense subset of all quantum memory channels. 
For such memory channels the state of the memory is ``forgotten'' after 
a certain number of uses. In other words, 
after $n$ uses of the memory channel 
the $(n+1)$th output state is approximately the same whatever was the 
original state of the memory. Our task is to identify those channels,
for which the output state is \emph{exactly} the same and to analyze the
the memory depth once the size of the memory system is fixed.

Let us note that the concept of finiteness of the memory we are going to use 
is different as the one introduced in Ref.\cite{bowen2004}, where 
the finiteness means the size of the memory
system. In our case, the finiteness is related rather to the depth of 
memory effects. Our ultimate goal is to clearly formulate
this concept and investigate the simplest case of qubit memory channels.
We want to characterize those memory channels for which the memory
depth is finite. Such memory channels can potentially mimic memoryless 
channels, paying the cost of larger inputs.

In the following Section II we will formalize the language of quantum 
memory channels. In Section III we will formulate the problem in general
settings. The qubit case will be investigated in details in Section IV.
The results are summarized in the last Section V. 

\section{Preliminaries}
Let us denote by $\cH$ a Hilbert space of the studied quantum system
and by $\cL(\cH)$ a set of bounded linear operators on $\cH$.
A state $\varrho$ is any positive linear operator on $\cH$ of unit trace, i.e.
$\varrho\ge 0$ and $\tr{\varrho}=1$. A linear map $\cE$ on the set 
of traceclass operators is called a channel
if it is completely positive [$\cL(\cH\otimes\cH_{\rm anc})\ni X\ge 0 $ 
implies $(\cE\otimes\cI_{\rm anc})[X]\ge 0$]
and trace-preserving ($\tr{\cE[X]}=\tr{X}$). The famous Stinespring dilation
theorem says that any channel can be realized as a unitary channel
on some extended Hilbert space, i.e.
\be
\cE[X]={\rm tr}_{\rm anc}[U (X\otimes\xi_{\rm anc}) U^\dagger]
\ee
for some unitary operator $U\in\cL(\cH\otimes\cH_{\rm anc})$
and some state $\xi_{\rm anc}$.

By a process device we will understand any fixed piece of hardware
transforming quantum system from their initial state to some final
state. In each individual use it is described by some quantum 
channel, i.e. $\varrho\mapsto\varrho^\prime=\cE[\varrho]$. It is 
memoryless if its joint action on $n$ subsequent inputs is 
factorized and in each run it is the same, i.e.
$\cE_{1\dots n}=\cE_1\otimes\cdots\otimes\cE_1$ for all 
$n=1,2,\dots$. If such property does not hold then no single
channel can be used to describe the quantum process device. The process
device is in general described by an infinite sequence
$\cE_1,\cE_{12},\dots$ of channel acting on 
$\cH, \cH\otimes\cH,\dots$, respectively. The causality requirement that 
the actual action does not depend on future inputs implies that 
\be
\nonumber
{\rm tr}_n\cE_{12\dots n}[X_{1,2,\dots,n-1}\otimes Y_n]
=\cE_{1,2,\dots, n-1}[X_{1,2,\dots,n-1}]
\ee 
for all $X,Y$. In the seminal work \cite{kretschmann2005} it was shown 
that such causal quantum memory channel can be always expressed
as a concatenation of unitary channels describing a sequence of
interactions between the individual inputs and some fixed memory 
system, i.e.
\be
\nonumber
\cE_{1,2,\dots n}[\omega_{12\dots n}]=
{\rm tr}_{\rm mem}[U_n\dots U_1(\omega_{12\dots n}\otimes \xi_{\rm mem}) 
U_1^\dagger\dots U_n^\dagger]\,,
\ee
where $\xi_{\rm mem}$ is a state of an ancillary system called memory
and the bipartite unitary operator $U_j$ acts nontrivially only 
on the $j$th input and the memory system. This representation is not unique
and by definition we assume that we do not have direct access to 
the memory system. 

In what follows we shall restrict to a specific type of quantum memory 
models, in which the interactions are described by the same unitary 
operator, i.e. $U_1=U_2=\cdots=U_n=U$. Let us note that 
for general considerations this case covers the most general situation. 
In particular, let $U_1,U_2,\dots$ be the sequence of unitaries defining 
a quantum memory channel (potentially $U_j\neq U_k$). We can 
define a unitary operator
$W=\sum_{j=0}^\infty U_{j}\otimes \ket{j+1}\bra{j}$ on $\cH\otimes\cH_{\rm mem}
\otimes\cH_\infty$, where $\cH_\infty$ is the Hilbert space of the 
linear harmonic oscillator (being part of the memory system) 
and $U_j$ are the unitaries associated with the quantum memory 
channel. In this sense any quantum memory channel is generated by a fixed 
unitary operator $U=W$ and some initial memory state $\xi_{\rm mem}$. 
However, such reduction requires infinite memory system.

Let us stress that only if the input states are uncorrelated,
$\omega_{12\dots n}=
\varrho_1\otimes\varrho_2\otimes\cdots\otimes\varrho_n$, then
the transformation of each input state is described by a quantum 
channel. Otherwise, the channel model is not applicable.
On one side this is indeed a restrictive condition, however, 
on the other side it is experimentally very relevant. 
The channel $\cE_n$ transforming the $n$th input, in general, depends
on all previous inputs $\varrho_1,\dots,\varrho_{n-1}$. If this is the case
for all $n$, then we say that the memory is infinite. The other extreme
is the memoryless case, when $U=V\otimes V_{\rm mem}$ and the 
channel $\cE_n$ is completely independent of any input. For example,
if $\cH_{\rm mem}\equiv\cH$ and $U=V_{\rm swap}$ is the swap operation 
($V_{\rm swap}\varrho\otimes\xi V_{\rm swap}^\dagger=\xi\otimes\varrho$), then 
$\cE_n[\varrho_n]=\varrho_{n-1}$, thus, $\cE_n$ is a complete contraction
of the state space into the state $\varrho_{j-1}$, which describes the
$(n-1)$th input. In such case the memory is of finite depth, because 
$\cE_n$ depends solely on the input state $\varrho_{n-1}$. 

In general, we say that a memory of the quantum memory channel 
generated by a unitary operator $U$ is of depth $\Delta_U$, 
if for each $n$ the channel $\cE_n$ does not depend on the 
initial memory state $\xi_{\rm mem}$, neither on the particular 
choice of input states $\varrho_j$ for all $j<n-\Delta_U$. 
Or, alternatively, the depth is $\Delta_U$ if for each 
$n$ the channel $\cE_n$ is independent of the inputs preceding 
$(n-\Delta_U)$th run of the process device including the original
memory state $\xi_{\rm mem}$. For example, the SWAP operator is of depth 
1, i.e. $\Delta_{V_{\rm swap}}=1$.

Our goal is to analyze which interactions $U$ generate
memory channels with finite memory irrespective of the initial state
of the memory system.

\section{Finite depth memory}
The channel $\cE_j$ transforming a given input $\varrho_j$
is generated by the interaction $U$ and the state of the 
ancilla $\xi_j$ in the $j$th run of the process device.
All the parameters the channel $\cE_j$ depends on are only mediated 
through the memory state $\xi_j$. Choosing an orthogonal 
operator basis $\tau_0,\dots,\tau_{d^2-1}$
of the memory system the memory state $\xi$ takes the form
\be
\xi=\sum_k m_k \tau_k\, ,
\ee
and the resulting channel reads
\be
\cE_\xi[\varrho]=\sum_k m_k {\rm tr}_{\rm mem}[U\varrho\otimes\tau_k U^\dagger]\,.
\ee
Let us note that orthogonality is defined with respect to Hilbert-Schmidt 
scalar product $\Bra A,B\Ket_{\rm hs}=\tr{A^\dagger B}$.

If for a fixed unitary operator $U$ and arbitrary
input state $\varrho$ we have 
${\rm tr}_{\rm mem}[U\varrho\otimes A U^\dagger]=O$ for some
operator $A$, then
the induced channels $\cE$ are independent of parameter 
$\tr{\xi A}$. It follows from the fact that the operator $A/\tr{A^\dagger A}$ 
can be taken to be an element of the orthonormal operator basis $\{\tau_k\}$ 
and $\xi=\sum_k \tr{\xi\tau_k}\tau_k$. The set of all such 
operators $A$ form a linear subspace of $\cL(\cH)$
and we call the corresponding state parameters
$\tr{\xi A}$ irrelevant, because $\cE$ does not depend on them.
Let us note that the identity operator $I$ is never irrelevant, i.e.
${\rm tr}_{\rm mem}[U(\varrho\otimes I) U^\dagger]\neq O$. Therefore, 
without loss of generality we can set $\tau_0=I/\sqrt{d}$ and, 
consequently, due to orthogonality the other elements of the operator 
basis are traceless, i.e. $\tr{\tau_k}=0$ for all $k\neq 0$. Thus, the 
irrelevant operators are necessarily traceless. In such basis the states $\xi$
take the form $\xi=\frac{1}{d}I+\vec{m}\cdot\vec{\tau}$, hence they
are uniquely represented by $(d^2-1)$-dimensional vectors $\vec{m}$ (so-called
Bloch vectors). The entries of each vector $\vec{m}$ can be split
into relevant and irrelevant ones. We will focus 
on the behavior of the relevant parameters mediating the memory effects.

Using the process device $n$ times the memory undergoes an evolution
\be
\xi_{n+1}=\cF_n[\xi_{n}]=\cdots = \cF_n\cdots\cF_1[\xi_{\rm mem}]\,,
\ee
where $\cF_j$ is defined via $\cF_j[\xi_{j}]={\rm tr_{sys}}[U\varrho_{j}
\otimes \xi_{j} U^\dagger]$ and $\xi_1=\xi_{\rm mem}$ is the initial state
of the memory system. Let us define a channel $\cG=\cF_n\cdots\cF_1$. This
channel potentially depends on all input states $\varrho_1,\dots,\varrho_n$,
hence, consequently, the memory state $\xi_{n+1}$ and also the channel $\cE_{n+1}$
depend on $\xi_1$ and all inputs $\varrho_1,\dots,\varrho_n$. If the memory
is finite and of the depth $n$, then $\cE_{n+1}$ does not depend on $\xi_1$
whatever collection of input states $\varrho_1,\dots,\varrho_n$ was used.
This happens if the relevant parameters of $\xi_{n+1}$ do not depend
on the memory state $\xi_1$. Let us note that $\xi_n$ still may depend
on input states $\varrho_1,\dots,\varrho_n$, however, it is independent 
on any input preceding $\varrho_1$. As it is required this feature is invariant 
in time. That is, $\cE_{s+n+1}$ is independent of memory state
$\xi_s$ and also on all input states $\varrho_j$ with $j\leq s$. 

The goal is to investigate for which $n$ the concurrent 
channel $\cG$ is deleting 
all relevant parameters of the memory system 
whatever sequence $\varrho_1,\dots,\varrho_n$ is used. The action of the
channel $\cG$ on Bloch vectors $\vec{m}$ takes the form of an affine
mapping, i.e. $\vec{m}\mapsto \vec{g}+G\vec{m}$, where
$g_k=\frac{1}{d}\tr{\tau_k^\dagger\cG[I]}$ and $G_{kl}=\tr{\tau_k^\dagger\cG[\tau_l]}$
for $k,l=1.\dots,d^2-1$. Since $\cG$ is a composition of channels
$\cF_1,\dots,\cF_n$ , using the corresponding 
vectors $\vec{f}_j$ and matrices $F_j$, the action can be expressed as
\be
\nonumber
\vec{m}_1\to \vec{m}_{n+1}=(F_n\cdots F_1)\vec{m}+(F_n\cdots F_2)\vec{f}_1
+\cdots +\vec{f}_n\,,
\ee
thus $G=F_n\cdots F_1$ and $\vec{g}=(F_n\cdots F_2)\vec{f}_1+\cdots +\vec{f}_n$.
The requirement of finite depth of the memory implies that
relevant parameters of $\vec{m}_{n+1}$ are independent of $\vec{m}_1$
for all input states $\varrho_1,\dots,\varrho_n$, 
hence, $G$ is singular and maps any vector $\vec{m}_1$ into the subspace 
spanned by ``irrelevant'' operators $\tau_k$. Let us note that product 
of nonsingular matrices is not singular. Since we do require that $G$ is 
singular for all sequences of inputs it follows that each $F_j$ must be 
singular. If for some input state $\varrho$ the matrix $F$ is not singular,
then sequence $\varrho^{\otimes n}$ induces a nonsingular matrix $G=F^n$
for arbitrary $n$. In such case, the memory depth is infinite.
Therefore, the singularity of the matrices $F$ for all input states
$\varrho$ is a necessary (but not sufficient) condition for $U$ to generate a 
finite quantum memory channel.

Let us note that a finite depth memory channel does not create any 
correlations between outputs separated by $n$ uses if all inputs are 
factorized (see Appendix A). Consequently, its actions (separated by $n$ uses) 
are independent. In this way the memory process device can be used to 
implement a memoryless channel, using first $n$ inputs as a reset sequence 
which will set the memory system to some particular (although not arbitrary) 
state ignoring the outputs and then performing the channel on next input. 
The proof of this statement is given in appendix A.

\section{Case study: two-dimensional memory}
In this section we will investigate qubit memory channels with
a two-dimensional memory system. The question is what are the possible values
of $\Delta$ in such very specific settings. Let us use the basis of Pauli 
operators $\sigma_x,\sigma_y,\sigma_z$
to express the qubit states. Then
the memory state takes the form 
$\xi_1=\frac{1}{2}(I+\vec{m}_1\cdot\vec{\sigma})$ and can be represented
by a three-dimensional Bloch vector $\vec{m}_1$. Similarly, let us assume
that the system is initially prepared in a state $\varrho_1=
\frac{1}{2}(I+\vec{r}_1\cdot\vec{\sigma})$. The action of the concurrent channel
$\cF_1[\xi_1]={\rm tr}_{\rm sys}[U\varrho_1\otimes\xi_1 U^\dagger]$ can be 
expressed by means of vector $\vec{f}_1=\frac{1}{2}\tr{\vec{\sigma}
\cF[I]}$
and matrix $F_{1,jk}=\frac{1}{2}\tr{\sigma_j\cF_1[\sigma_k]}$.
In particular, in the language of Bloch vectors the channel
takes an affine form $\vec{m_1}\to\vec{t}+T\vec{m_1}$, hence,
in the  $n$th run the memory system is transformed as 
$\vec{m}_n\to\vec{f}_n+F_n\vec{m}_n$, where by $\vec{m}_n$
we denoted the state of the memory before the $n$th use of the process 
device. As before, the initial memory $\vec{m_1}$ is transformed
as follows
\be
\nonumber
\vec{m}_1\to \vec{m}_{n+1}=(F_n\cdots F_1)\vec{m_1}+(F_n\cdots F_2)\vec{f}_1
+\cdots +\vec{f}_n\,.
\ee

A general two-qubit unitary transformation can be expressed as follows
(see for example \cite{kraus2000})
\be
U=(V_1\otimes W_1) e^{i\sum_j \alpha_j\sigma_j\otimes\sigma_j} (V_2\otimes W_2)\, ,
\label{eq_Unitary}
\ee
where $V_j,W_j$ are single qubit unitary operators and $\alpha_j$ 
are real numbers. We learnt that in order to generate a quantum memory
channel with finite depth of the memory for all input sequences, 
it is necessary for $U$ that the induced concurrent 
channels $\cF_j$ are singular.
Since local unitary rotations $V_j\otimes W_j$ do not affect 
the singularity, it is sufficient for now to analyze only the unitary operators 
of the form $U=e^{i\sum_j\alpha_j\sigma_j\otimes\sigma_j}$.

For the considered unitary operator $U=e^{i\sum_j\alpha_j\sigma_j\otimes\sigma_j}$
the matrix $F$ takes the form
\begin{equation}
F(\vec{r})=\left(
\begin{array}{ccc}
c_y c_z & r_z c_y s_z & - r_y s_y c_z \\
-r_z c_x s_z & c_x c_z & r_x s_x c_z \\
r_y c_x s_y & -r_x s_x c_y & c_x c_y
\end{array}
\right)\,,
\end{equation}
where $c_j=\cos{2\alpha_j}$ and $s_j=\sin{2\alpha_j}$. Let us note that
due to symmetry of $U$ with respect to exchange of the 
system and the memory, the same matrix describes the channel
acting on the system, only the role of $\vec{r}$ is replaced
by the initial state of the memory $\vec{m_1}$.

Evaluating
the determinant we get
\be\nonumber
\det F(\vec{r})= r_x^2 s_x^2 c_y^2 c_z^2 + 
r_y^2 c_x^2 s_y^2 c_z^2 +
r_z^2 c_x^2 c_y^2 s_z^2 +c_x^2 c_y^2 c_z^2  \,.
\ee
It vanishes if and only if at least one of the following
conditions hold
\be
\cos{2\alpha_x}=\cos{2\alpha_y}=0\,;\\
\cos{2\alpha_x}=\cos{2\alpha_z}=0\,;\\
\cos{2\alpha_y}=\cos{2\alpha_z}=0\,.
\ee
If exactly one of the above conditions holds, for instance 
$\cos{2\alpha_x}=\cos{2\alpha_y}=0$, then
\begin{equation}
F(\vec{r})=(\cos{2\alpha_z})\left(
\begin{array}{ccc}
0 & 0 & \pm r_y\\
0 & 0 & \pm r_x\\
0 & 0 & 0
\end{array}
\right)\,,
\label{eq_Fr}
\end{equation}
is a matrix of rank one and $F(\vec{r}_2) F(\vec{r}_1)=O$. 
Setting $F_j=F(\vec{r}_j)$ we get for all $j$
\be
\vec{m}_{j+1}=F_j\vec{m}_j+\vec{f}_j=F_{j}\vec{f}_{j-1}+\vec{f}_j\, .
\ee
Since $\vec{f}_j$ depends only on input state $\varrho_j$
the state of the memory $\xi_{j+1}$ depends only
on input state $\varrho_j$ and $\varrho_{j-1}$, i.e. on preceding
two input states. Therefore, the memory depth equals $\Delta=2$.
That is, the $j$th input state is transformed by a channel $\cE_j$
\be
\vec{r}_j^\prime = E_j\vec{r}_j + \vec{e}_j\,,
\ee
where $E_j,\vec{e}_j$ depends via the memory state $\vec{m}_j$
on input states $\varrho_{j-1}$ and $\varrho_{j-2}$.

Due to already mentioned symmetry of $U$ 
it follows that the channel $\cE_\xi$ acting on the system qubit
does not depend on the value of $m_z$, because
\be
E=(\cos{2\alpha_z})\left(
\begin{array}{ccc}
0 & 0 & \pm m_y\\
0 & 0 & \pm m_x\\
0 & 0 & 0
\end{array}
\right)\,. 
\label{eq_kanal}
\ee
The unitary operators $U=\exp{(i\sum_j \alpha_j\sigma_j\otimes\sigma_j)}$ 
generating the considered finite memory channels 
($\alpha_x,\alpha_y\in\{\pm\pi/4\}$) 
are of the form 
\be
\label{eq:Ualfa}
U_{\alpha_z}=
\frac{1}{2}[I+\sigma_{zz}+ie^{-2i\alpha_z}(\sigma_{xx}+\sigma_{yy})]
\sigma_{xx}^{h_x}\sigma_{yy}^{h_y}\,,
\ee
where $\sigma_{jj}=\sigma_j\otimes\sigma_j$, 
$h_j=H(-\alpha_j)$ ($j=x,y,z$) and $H(\cdot)$ is the Heavyside step function. 
The remaining options $\alpha_x,\alpha_z\in\{\pm \pi/4\}$ and
$\alpha_y,\alpha_z\in\{\pm\pi/4\}$ correspond to unitary operators that are
locally unitarily equivalent to $U_{\alpha_z}$. In particular, it is sufficient
to relabel the basis, i.e. instead of using the eigenbasis of $\sigma_z$
we use eigenbasis of $\sigma_x$, or $\sigma_y$ in which the unitary 
transformations $U_{\alpha_x},U_{\alpha_y}$ takes the same form.

The freedom as specified in \eqref{eq_Unitary} is a bit larger 
than that. Replacing the unitary operator $U_{\alpha_z}$ by a more
general one $U=V_1\otimes W_1 U_{\alpha_z} V_2\otimes W_2$ the concurrent
channel $F(\vec{r})$ takes the form
\be
F^\prime(\vec{r})=S^{\prime}F(\vec{r})R,
\ee
where $S^\prime$ and $R$ are orthonormal matrices corresponding 
to unitary operators $W_1$ and $W_2$, respectively. 
Since orthogonal matrices do not affect the singularity, 
the matrices $F^\prime(\vec{r})$ are singular.
Moreover, it can be rewritten in a more 
convenient form as $R^{-1}SF(\vec{r})R$, where $S=RS^\prime$ is a suitable 
orthogonal matrix. Using a sequence of input states 
$\varrho_1\otimes\cdots\otimes\varrho_n$ and defining 
$F^\prime_j=F^\prime(\vec{r}_j)$ we get
\be
G^\prime=F^\prime_n\cdots F^\prime_1=R^{-1}SF_n\cdots SF_1 R\, .
\ee
The question is for which values of $n$ and for which rotations
$S$ the matrices $G^\prime$ (generated by sequences $F_1,\dots,F_n$) 
maps memory states into the irrelevant subspace.

The matrix $R$ corresponds merely to changing the basis of memory system 
and as such does not affect the depth of memory of the memory channel 
and can be left arbitrary. We will not consider it in further calculations. 
The unitary matrix $W^{\prime}=W_2W_1$ corresponding to $S$ does not change 
the relevance of parameters, because for all operators $\tau$ 
and arbitrary $U$
\be
&&\Tr_{\rm mem}[(I\otimes W^{\prime}) U(\varrho\otimes\tau) U^{\dagger}(I\otimes W^{\prime\dagger})]=\nonumber\\
&&\sum_{abcd}\Tr[W^{\prime} \ket{a}\bra{b}\tau\ket{d}\bra{c}W^{\prime\dagger}]A_{ab}\varrho A^{\dagger}_{cd}=\nonumber\\
&&\sum_{abcd}\Tr[\ket{a}\bra{b}\tau\ket{d}\bra{c}]A_{ab}\varrho A^{\dagger}_{cd}=\nonumber\\
&&\Tr_{\rm mem}[U\varrho\otimes\tau U^{\dagger}],
\ee
where we used the expression $U=\sum_{ab}A_{ab}\otimes\ket{a}\bra{b}$
for some orthonormal basis $\{\ket{a}\}$ and operators $A_{ab}$
such that $U$ is unitary.

As we have seen in Eq.~\eqref{eq_kanal} there is only one irrelevant parameter 
$m_z$, because only $m_z$ does not enter the expression in Eq.~\eqref{eq_kanal}.
Consequently, we require for all sequences $F_1,\dots,F_n$ the 
following conditions 
\be
SF_nS\ldots SF_1=
\left(
\begin{array}{ccc}
0 & 0 & 0 \\
0 & 0 & 0 \\
x_1 & x_2 & x_3 \\
\end{array}
\right)\,,
\label{eq_cond}
\ee
where $x_1,x_2,x_3$ are arbitrary numbers, and $n$ will be the depth 
of this channel. 

Let us denote by $S_{kl}$ the entries of $S$ and define 
$a_{j,k}=\cos(2\alpha_z)(\pm r_{j,y}S_{k1}\pm r_{j,x}S_{k2})$ with
$j=1,\dots, n$ and $k,l=1,2,3$.
Then
\begin{equation}
SF_j=\left(
\begin{array}{ccc}
0 & 0 & a_{j,1}\\
0 & 0 & a_{j,2}\\
0 & 0 & a_{j,3}
\end{array}
\right)
\end{equation}
and the Eq.\eqref{eq_cond} reads
\be\nonumber
a_{1,3}\dots a_{n-1,3}
\left(
\begin{array}{ccc}
0 & 0 & a_{n,1} \\
0 & 0 & a_{n,2} \\
0 & 0 & a_{n,3} \\
\end{array}
\right)=
\left(
\begin{array}{ccc}
0 & 0 & 0 \\
0 & 0 & 0 \\
x_1 & x_2 & x_3 \\
\end{array}
\right)\,.
\ee
Since this relation must hold
for all states $\varrho$, i.e. for all Bloch vectors 
$\vec{r}_1, \dots,\vec{r}_n$, it is necessary that
$a_{j,3}=\pm r_{j,y} S_{31}\pm r_{j,x} S_{32}=0$ for
all vectors $\vec{r}_j$, thus, $S_{31},S_{32}=0$. Rotation matrices $S$
satisfying such constraint are necessarily of the form
\be
S=
\left(
\begin{array}{ccc}
q\cos2\beta & q\sin2\beta & 0 \\
-\sin2\beta &\cos2\beta & 0 \\
0 & 0 & q
\end{array} \right)\,,
\ee
where $q=\pm 1$ and $\beta\in[0,2\pi]$.
Therefore,
\be
SF_j=
\left(
\begin{array}{ccc}
0 & 0 &  q(r_{j,y}\cos2\beta  +r_{j,x}\sin2\beta)\\
0 & 0 & - r_{j,y}\sin2\beta +r_{j,x}\cos2\beta\\
0 & 0 & 0
\end{array}
\right)
\ee
are matrices of the same form as for $F_j$ only. The same
arguments imply that the depth is either 1, or 2, because
$SF_2 SF_1=O$ for all possible matrices $F_1,F_2$.
The unitary matrix $W^\prime$ corresponding to $S$ equals to
\be
\label{eq:Wprime}
W^{\prime}=\left(
\begin{array}{cc}
0 & 1 \\
-1 & 0
\end{array} \right)^{\frac{1-q}{2}}
\left(
\begin{array}{cc}
e^{i\beta} & 0 \\
0 & e^{-i\beta}
\end{array}\right).
\ee

In conclusion, the memory is finite only 
if the quantum memory channel is induced by unitary operator $U$
of the form (in some factorized basis)
\be
U=
(V_1\otimes W_2^{\dagger}W^{\prime}) U_{\alpha} (V_2\otimes W_2)\,.
\ee
Moreover, in such case necessarily $\Delta_U\leq 2$, hence, the memory 
depth (if not infinite) is surprisingly quite limited. If 
$\alpha_x=\alpha_y=\alpha_z=\pi/4$,
then $F_j$ is a zero matrix, $F_j\equiv O$, and $U_{\pi/4}=e^{i\pi/4\sum_j 
\sigma_j\otimes\sigma_j}=V_{\rm swap}$ is the swap operator. 
In such case, 
\be
\vec{m}_j&\mapsto&\vec{m}_{j+1}=\vec{f}_j=\vec{r}_j\, ;\\
\vec{r}_j&\mapsto&\vec{r}_{j}^\prime =\vec{m}_j=\vec{r}_{j-1}\, ;
\ee
where $j$th output state equals to $(j-1)$th input state, 
i.e. $\Delta_{V_{\rm swap}}=1$. In summary, the depth of 
the memory $\Delta_U$ in the considered case of single qubit 
memory systems can achieve only the values 0,1,2, or infinity.

\subsection{Classical bits}
Let us shortly discuss the case of classical memory channels. 
Quantum description covers the classical one in a sense that classical
states are density operators orthogonal in some fixed (factorized) basis, i.e. 
they represent probability distributions expressed as diagonal matrices.
Similarly, unitary operators are replaced by permutations, which form
a very specific subgroup of all unitary operators. Having in mind these
restrictions all the discussed concepts are applicable for classical 
systems as well. 

A classical bit is the simplest classical system having the quantum bit
as its quantum counterpart. The states are expressed as density operators 
$p\ket{0}\bra{0}+(1-p)\ket{1}\bra{1}$ and there are only two permutations
corresponding to $I$ and $\sigma_x$, which flips the bit values. 
Assuming the memory system is also of the size of a single classical bit, 
there are only 4!=24 permutations $U$ describing the classical memory 
channels of a single bit. Analyzing all of them we find that
the memory depth can be 0,1, or infinity, because $U_\alpha$ describes
a permutation only if $\alpha=\pi/4$, i.e. when it is the SWAP operator.

\section{Conclusion}
For each quantum memory channel describing any quantum process device
we can assign a parameter $\Delta_U$ meaning that its $n$th run depends 
at most on the previous $\Delta_U$ uses. Equivalently, the input-output 
action is irrelevant of the state of the memory after the $(n-\Delta_U)$th use. 
We call this number the depth of the memory. We investigated in details 
the simplest case of qubit memory channels with the memory system composed 
of a single qubit, as well. We showed that values of the memory depth are 
restricted and $\Delta_U\in\{0,1,2,\infty\}$. Let us note that 
in the analogous situation for classical systems 
$\Delta_U\in\{0,1,\infty\}$. In particular,
$\Delta_U=0$ if $U$ is factorized, $\Delta_U=1$ if $U$ is the SWAP 
operator (up to local unitaries) and $\Delta_U=2$ if 
\be
\nonumber
U=\ket{\varphi}\bra{\varphi^\prime}&\otimes&\ket{\psi}\bra{\psi^\prime}+
\ket{\varphi_\perp}\bra{\varphi_\perp^\prime}\otimes\ket{\psi_\perp}\bra{\psi^\prime_\perp}+\\
\nonumber 
+ie^{-2i\alpha} (
\ket{\varphi}\bra{\varphi^\prime_\perp}&\otimes&\ket{\psi_\perp}\bra{\psi^\prime}+
\ket{\varphi_\perp}\bra{\varphi^\prime}\otimes\ket{\psi}\bra{\psi^\prime_\perp})
\,,
\ee
where $\ket{\psi}=W_2^\dagger W^\prime\ket{0}$ (see Eq.\eqref{eq:Wprime}), 
$\ket{\psi^\prime}=W_2^\dagger\sigma_{xx}^{h_x}\sigma_{yy}^{h_y}\ket{0}$,
$\ket{\varphi}=V_1\ket{0}$, $\ket{\varphi^\prime}=V_2^\dagger\sigma_{xx}^{h_x}\sigma_{yy}^{h_y}\ket{0}$.
In all other cases the memory is infinite.

If the memory depth is finite, then a sequence of input states
can be used to reset the memory system into a fixed state irrelevant
of the initial state of the memory and inputs preceding the reset 
input sequence. Applying the same reset sequence guarantees
that in each $(\Delta_U+1)$th use locally the same channel is 
implemented. In Appendix it is shown that actions of the process
device separated by reset sequences are indeed uncorrelated.

That is, in each $(n+1)$th run of the process device the same
quantum channel is independently implemented providing that the same reset
sequence is used. In this way, memory channels can be used as memoryless
ones. However, that it is an open problem whether any 
channel can be implemented on some finite-depth memory channel in this way
and also whether there is some bound on the size of the reset sequence 
and the memory system. So far, we know that if we restrict ourselves to
single qubit memory, then such channels are represented by rank-1
matrices and the reset sequence is of length at most 2. 

In summary, for most of the qubit memory channels the memory effects have 
infinite depth. Based on our investigation of the simplest physical
model we can make a rather surprising conjecture that the dimension 
of the memory puts constraints on the memory depth $\Delta_U$. Unfortunately,
we have not succeeded to find any simple analytic bound expressing this 
relation. Similarly, the characterization of general unitary operators
generating fine-depth memory channels remains open. 

\section*{Acknowledgments}
We acknowledge financial support via the European Union project 
HIP FP7-ICT-2007-C-221889, and via the projects
APVV LPP-0264-07 QWOSSI, VEGA-2/0092/09, OP CE QUTE ITMS NFP 262401022, 
and CE-SAS  QUTE.

\begin{appendix}
\section{Correlations}
\begin{theorem}
Consider a unitary memory channel $U$ of the depth $n$, i.e.
$\Delta_U=n$. Then the actions of the process device separated by $n$ 
uses (reset sequence) are not correlated providing that the reset
sequences are not correlated, i.e.
\be
\cE[\omega_{n+1,2n+2}]=(\cE_{n+1}\otimes\cE_{2n+2})[\omega_{n+1,2n+2}]\,,
\ee
where $\omega_{n+1,2n+2}$ is the joint state of $(n+1)$th
and $2(n+1)$th inputs and $\cE_j$ denotes the action of the
memory channel on its $j$th input.
\end{theorem}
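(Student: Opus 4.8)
The plan is to peel off the two reset sequences as concurrent channels acting on the memory alone, and then to invoke twice the defining property of a depth-$n$ memory: a block of $n$ consecutive uses erases every relevant memory parameter, whatever its inputs were. Concretely, I would start from the unitary dilation $U_{2n+2}\cdots U_1$ on inputs $1,\dots,2n+2$ and the memory, and trace out the memory together with every input except the $(n+1)$th and the $(2n+2)$th. Because the first reset sequence $\varrho_1,\dots,\varrho_n$ is a product state uncorrelated with $\omega_{n+1,2n+2}$, the first $n$ uses compose into a concurrent channel $\cG_1=\cF_n\cdots\cF_1$ on the memory, so the memory entering use $n+1$ is a fixed state $\xi_{n+1}=\cG_1[\xi_{\rm mem}]$ whose relevant parameters---the only ones $\cE_{n+1}$ can see---are independent of $\xi_{\rm mem}$ by $\Delta_U=n$. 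Thus $\cE_{n+1}[\varrho]={\rm tr}_{\rm mem}[U(\varrho\otimes\xi_{n+1})U^\dagger]$ is a fixed channel, and after use $n+1$ the tripartite state $\Xi$ on (output $n+1$, input $2n+2$, memory) obeys ${\rm tr}_{\rm mem}\Xi=(\cE_{n+1}\otimes\cI)[\omega_{n+1,2n+2}]=:\sigma$.

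Next I would isolate the single fact doing all the work, a decoupling lemma. First, the notion of an irrelevant memory operator extends by linearity from states to all of $\cL(\cH_{\rm mem})$: if $A$ lies in the irrelevant subspace then ${\rm tr}_{\rm mem}[U(Z\otimes A)U^\dagger]=O$ for \emph{every} operator $Z$ on the system, not just for states $Z$. Hence, if the memory factor of a (possibly multipartite, correlated) operator lies in the irrelevant subspace, that operator contributes nothing to the next output, and in particular it can carry no correlations into it---since $U_{2n+2}$ acts trivially on output $n+1$, that register simply factors through the partial trace. Second, the hypothesis $\Delta_U=n$, applied now with uses $n+2,\dots,2n+1$ as the reset block and everything up to and including use $n+1$ as the ``past,'' says exactly that the concurrent channel $\cG_2=\cF_{2n+1}\cdots\cF_{n+2}$ of the second (product) reset sequence sends the memory basis $\{\tau_k\}$ (with $\tau_0=I/\sqrt d$) to $\cG_2[\tau_0]=\xi_0+(\text{irrelevant})$ and $\cG_2[\tau_k]\in(\text{irrelevant subspace})$ for $k\neq 0$, where the relevant state $\xi_0$ depends neither on $\varrho_{n+1}$ nor on $\xi_{\rm mem}$; equivalently, the Bloch matrix of $\cG_2$ has vanishing relevant rows.

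With these in hand the assembly is mechanical. Applying $\cG_2$ to the memory slot of $\Xi$ and expanding $\Xi=\sum_k M_k\otimes\tau_k$ (so $M_0=\sigma/\sqrt d$), the second fact gives $(\cI\otimes\cG_2)[\Xi]=\sigma\otimes\xi_0+\Theta$, where every term of $\Theta$ has its memory factor in the irrelevant subspace. Then I apply $U_{2n+2}$ on (input $2n+2$, memory) and trace out the memory: the decoupling lemma annihilates $\Theta$, while on $\sigma\otimes\xi_0$, writing $\sigma=\sum_\ell P_\ell\otimes Q_\ell$ with $P_\ell$ on output $n+1$, one gets $\sum_\ell P_\ell\otimes{\rm tr}_{\rm mem}[U(Q_\ell\otimes\xi_0)U^\dagger]=(\cI\otimes\cE_{2n+2})[\sigma]$, where $\cE_{2n+2}[\cdot]={\rm tr}_{\rm mem}[U(\cdot\otimes\xi_0)U^\dagger]$ is precisely the marginal channel on input $2n+2$---it sees only the relevant part $\xi_0$ of $\xi_{2n+2}$, which by the previous step is fixed. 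Composing, the output equals $(\cI\otimes\cE_{2n+2})\circ(\cE_{n+1}\otimes\cI)[\omega_{n+1,2n+2}]=(\cE_{n+1}\otimes\cE_{2n+2})[\omega_{n+1,2n+2}]$, as claimed.

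The step I expect to be the real obstacle---everything else being bookkeeping---is the decoupling lemma in the presence of correlations: one has to be certain that an irrelevant component of the memory cannot smuggle correlations from output $n+1$ into output $2n+2$. This is settled once irrelevance is extended from states to arbitrary operators by linearity and one notes that $U_{2n+2}$ leaves output $n+1$ untouched, so that register can be pulled out of the trace. A secondary point to state carefully is the time-translation invariance built into the definition of depth, which is what licenses using the depth-$n$ property a second time for the block of uses $n+2,\dots,2n+1$ rather than only for the first $n$.
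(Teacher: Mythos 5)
Your proof is correct, and it rests on the same underlying mechanism as the paper's --- the depth hypothesis is invoked twice, once per reset block, to show that each block decouples the memory from everything preceding it --- but the formalism and the key lemma are genuinely different. The paper works in the operator-sum picture: it expands $U=\sum_{a,b}A_{ab}\otimes\ket{a}\bra{b}$ in a fixed memory basis, packages a reset block into operators $M_{a_n a_0}$, extracts from the depth hypothesis the single identity $\Omega_{ac}(\Xi,\omega)=\delta_{ac}\Omega_0(\Xi,\omega)$, and inserts it twice into one monolithic index computation for the $(2n+2)$-fold concatenation. You instead reuse the Bloch/irrelevant-subspace machinery of Section III: (i) the depth condition forces the relevant rows of the reset block's concurrent map $\cG_2$ to vanish, so $\cG_2[\tau_k]$ is irrelevant for $k\neq 0$, and (ii) irrelevance extends by linearity from states to arbitrary operators, so any correlated term whose memory factor is irrelevant is annihilated by the final partial trace. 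These two facts are exactly the Hermitian-basis translation of $\Omega_{ac}=\delta_{ac}\Omega_0$, which in the computational basis says that the traceless memory directions $\ket{a}\bra{c}$ ($a\neq c$) and $\ket{a}\bra{a}-\ket{c}\bra{c}$ produce no output after a reset block plus one use. Your modularization (concurrent channel of the block, then one use in which irrelevance kills the remainder) makes the decoupling mechanism more transparent and correctly isolates the one delicate point, namely that an irrelevant memory component cannot smuggle correlations from output $n+1$ into output $2n+2$ because the untouched register factors through the trace; the paper's version is more self-contained, since it derives its key identity directly from the definition of depth and never needs the equivalence between that definition and the vanishing of the relevant rows of $G$, an equivalence your route does rely on (it holds by maximality of the irrelevant subspace, but is worth stating). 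Both arguments are sound.
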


\begin{proof}

Let us denote by $\Xi=R_1\otimes\cdots\otimes R_n$ 
the sequence of input states forming the so-called reset sequence. 
This sequence, together with the memory system $\xi$, is inducing a channel $\cE_{\Xi}^{\xi}$ on the $(n+1)$th process device input state
\be
\cE_{\Xi}^{\xi}[\omega]=\Tr_{\rm res, mem}
[U^{(n+1)}(\Xi\otimes\omega\otimes\xi)U^{(n+1)\dagger}]\,,
\ee 
where $U^{(n+1)}$ is the $n+1$-fold concatenation of the channel $U$
and $\omega$ is the state of input system. 
Let us express the interaction $U$ as follows
\be
U=\sum_{a,b}A_{ab}\otimes\ket{a}\bra{b}\,,
\ee
where $A_{ab}$ are operators acting on the principal system
and vectors $\{\ket{a}\}$ form an orthonormal basis of the Hilbert space
of the memory system. The unitarity of $U$ imposes 
the following normalization conditions on operators $A_{ab}$
\be
\sum_a A^\dagger_{ab} A_{ac}^{}=\delta_{bc}I\,, 
\quad
\sum_b A_{ab}^{} A^\dagger_{cb}=\delta_{ac}I\,. 
\ee
Defining the operators
\be
M_{a_n a_0}= \sum_{a_1,\dots,a_{n-1}} A_{a_1 a_0}\otimes\cdots\otimes A_{a_n a_{n-1}}
\ee
acting on the Hilbert space of the reset sequence $\cH_{\rm res}$
we get
\be
\nonumber
\cE_\Xi^\xi[\omega]&=&\hskip-0.3cm\sum_{\substack{a_0, a_n, a_{n+1}\\c_0,c_n}} 
\hskip-0.3cm\xi_{a_0 c_0}\tr{M_{a_n a_0}\Xi M_{c_n c_0}^\dagger}
A_{a_{n+1} a_n}\omega A^\dagger_{a_{n+1} c_n}\\
&=& \sum_{a,c} \xi_{ac}\Omega_{ac}(\Xi,\omega)
\,,
\ee
where $\xi_{ac}=\bra{a}\xi\ket{c}$, $A_{a_{j} a_{j-1}}$ acts
on $j$th input of the reset sequence and $\Omega_{ac}(\Xi,\omega)$ are 
operators defined on the $(n+1)$th principal system. These operators
depend on $\Xi,\omega$, but not on the state $\xi$. 

Then, the finite memory 
depth condition implies that for all memory 
states $\xi,\xi^\prime$ following relation holds
\be
\label{eq:a2}
\cE_{\Xi}^{\xi}[\omega]=\cE_{\Xi}^{\xi^\prime}[\omega]\equiv\cE_\Xi[\omega]\, ,
\ee
for all states $\omega$. Especially, for memory states
$\xi=\ket{a}\bra{a}$ we get $\cE_\Xi^{\ket{a}\bra{a}}[\omega]=
\Omega_{aa}(\Xi,\omega)=\Omega_0(\Xi,\omega)$ for all values of $a$. 
Using a general state $\Xi$ we obtain
\be
\cE_\Xi^\xi[\omega]=\Omega_0(\Xi,\omega)+
\sum_{a\neq c} \xi_{ac}\Omega_{ac}(\Xi,\omega)\,,
\ee
and, consequently, the condition \eqref{eq:a2} implies
that $\Omega_{ac}(\Xi,\omega)=O$ for all $a\neq c$. In summary,
\be
\nonumber
\Omega_{ac}(\Xi,\omega)&=&\sum_{a_n,a_{n+1},c_n}
\tr{M_{a_n a}\Xi M_{c_n c}^\dagger}A_{a_{n+1} a_n}\omega A^\dagger_{a_{n+1} c_n}\\
\label{eq:a3}
&=&\delta_{ac} \Omega_0(\Xi,\omega)\,,
\ee
and
\be
\cE_\Xi[\omega]=\Omega_0(\Xi,\omega)\,.
\ee

Next we add another reset sequence $\Xi_2$ followed by next input $\omega_2$
and analyze the joint action of the finite-depth memory process device on
the inputs $\omega_1$ and $\omega_2$. In such case
\be
\nonumber
& & \cE^\xi_{\Xi_1\otimes\Xi_2}[\omega_1\otimes\omega_2]=\\
\nonumber
& & =\Tr_{\rm res,mem}
[U^{(2n+2)}(\Xi_1\otimes\Xi_2\otimes\omega_{12}\otimes \xi)\nonumber 
U^{(2n+2)\dagger}]
\\ \nonumber & & 
= \sum \xi_{a_0 c_0}
\tr{M_{a_n a_0}\Xi_1M^\dagger_{c_n c_0}} 
A_{a_{n+1} a_n}\omega_1 A^\dagger_{c_{n+1},c_n}\otimes
\\ \nonumber & &
\tr{M_{a_{2n+1} a_{n+1}}\Xi_2M^\dagger_{c_{2n+1} c_{n+1}}} 
A_{a_{2n+2} a_{2n+1}}\omega_2 A^\dagger_{a_{2n+2},c_{2n+1}} 
\\ \nonumber & &
= \sum \xi_{a_0 c_0} 
\tr{M_{a_n a_0}\Xi_1M^\dagger_{c_n c_0}} 
A_{a_{n+1} a_n}\omega_1 A^\dagger_{c_{n+1},c_n}
\\ \nonumber & & 
\hspace*{1cm}\otimes\, \delta_{a_{n+1},c_{n+1}}\Omega_0(\Xi_2,\omega_2)
\\ \nonumber & &
=\Omega_0(\Xi_1,\omega_1)\otimes\Omega_0(\Xi_2,\omega_2)
\\ \nonumber & &
=(\cE_{\Xi_1}\otimes\cE_{\Xi_2})[\omega_1\otimes\omega_2]\,.
\ee
where we have used twice the identity in Eq.\eqref{eq:a3}. 
Let us note that due to linearity the inputs (separated
by the reset sequence $\Xi_2$) does not have to be factorized and altogether
are described by a density operator $\omega_{12}$. In conclusion, 
the actions separated by reset sequences take the ``memoryless'' form
\be
\cE_{\Xi_1\otimes\Xi_2}=\cE_{\Xi_1}\otimes\cE_{\Xi_2}\, .
\ee
This completes the proof.
\end{proof}
\end{appendix}

\end{document}